\documentclass{article}

\usepackage{arxiv}

\usepackage[utf8]{inputenc} 
\usepackage[T1]{fontenc}    
\usepackage{hyperref}       
\usepackage{url}            
\usepackage{booktabs}       
\usepackage{amsfonts}       
\usepackage{nicefrac}       
\usepackage{microtype}      
\usepackage{lipsum}
\usepackage{graphicx}
\graphicspath{ {./images/} }

\usepackage{cite}
\usepackage{amsmath,amssymb,amsfonts}
\usepackage{textcomp}
\usepackage{mathrsfs}
\usepackage{dsfont}
\usepackage[normalem]{ulem}
\usepackage{booktabs}
\usepackage{algorithm,algorithmic}
\usepackage{graphicx}
\usepackage{subcaption}
\usepackage{mwe}
\renewcommand{\epsilon}{\varepsilon}
\renewcommand{\rho}{\varrho} 

\newcommand{\Aa}{\mathcal{A}}

\newcommand{\F}{\mathcal{F}}

\newcommand{\U}{\mathcal{U}}
\newcommand{\X}{\mathcal{X}}
\newcommand{\uu}{\mathcal{U}}
\newcommand{\xx}{\mathcal{X}}
\newcommand{\ff}{\mathcal{F}}
\newcommand{\E}{\mathbb{E}}
\newcommand{\ppb}{\mathbb{P}}
\newcommand{\ppp}{\mathbf{P}}

\newcommand{\R}{\mathbb{R}}

\newcommand{\uf}{\mathfrak{u}}

\newcommand{\ub}{\mathbf{u}}

\newtheorem{prob}{Problem}
\newtheorem{definition}{Definition}

\newtheorem{remark}{Remark}
\newtheorem{prop}{Proposition}
\newtheorem{thm}{Theorem}

\newtheorem{ass}{Assumption}

\newtheorem{proof}{Proof}

\captionsetup[figure]{font={small},skip=0.6\baselineskip, labelsep=period}

\DeclareMathOperator*{\argmin}{arg\,min}
\usepackage{booktabs}

\title{Stochastic Control Barrier Functions with Bayesian Inference for Unknown Stochastic Differential Equations}

\author{
 Chuanzheng Wang \\
  Department of Applied Mathematics\\
  University of Waterloo, Canada\\
   \And
 Yiming Meng \\
  Grainger College of Engineering\\
  University of Illinois Urbana-Champaign\\
  \And
 Jun Liu \\
  Department of Applied Mathematics\\
  University of Waterloo, Canada\\
\And
  Stephen Smith\\
  Electrical and Computer Engineering\\
  University of Waterloo, Canada\\
}

\begin{document}
\maketitle
\begin{abstract}
Control barrier functions are widely used to synthesize safety-critical controls.  However, the presence of Gaussian-type noise in dynamical systems can generate unbounded  signals and potentially result in severe consequences. Although research has been conducted in the field of safety-critical control for stochastic systems, in many real-world scenarios, we do not have precise knowledge about the stochastic dynamics. In this paper, we delve into the safety-critical control for stochastic systems where both the drift and diffusion components are unknown. We employ Bayesian inference as a data-driven approach to approximate the system. To be more specific, we utilize Bayesian linear regression along with the central limit theorem to estimate the drift term, and employ Bayesian inference to approximate the diffusion term. Through simulations, we verify our findings by applying them to a nonlinear dynamical model and an adaptive cruise control model.
\end{abstract}


\label{sec:introduction}
For some real-world control problems, safety-critical control must be used in order to prevent severe consequences. It requires not only achieving control objectives, but also providing control actions with guaranteed safety \cite{garcia2015comprehensive}. Hence incorporating safety criteria is of great importance in practice when designing controllers. These requirements need to be satisfied in practice both with or without noise and disturbance. Since the notion of safety-critical control was first introduced in \cite{lamport1977proving}, there has been extensive research in safety verification problems, e.g., using discrete approximations \cite{ratschan2007safety} and computation of reachable sets \cite{girard2006efficient}. 


Recently, control barrier functions (CBFs) have been widely used to deal with safety-critical control \cite{ames2019control}. Quadratic programming (QP) problems are used to solve safety-critical control with constraints from CBFs, together with control Lyapunov functions (CLFs) for achieving stability objectives \cite{ames2014control}. The authors show that the safety criteria can be transformed into linear constraints of the QP problems. By taking the derivative of the CBF, the control inputs can be treated as the decision variables of the QP problem so that we can find a sequence of actions that can guarantee safe trajectories. The authors in \cite{rauscher2016constrained} then show that finding safe control inputs by solving QP problems can be extended to an arbitrary number of constraints and any nominal control law. Consequently, safety-critical control using QP with CBF conditions have been applied in a wide range of applications such as lane keeping \cite{ames2016control} and obstacle avoidance \cite{chen2017obstacle}. However, for many applications in robotics \cite{hsu2015control}, the derivative of the CBFs is not dependent on the control input, and thus the control action can not be directly solved using QP. To address this issue, CBFs have been extended to exponential control barrier functions (ECBF) to handle high relative degree constraints using input-output linearization \cite{nguyen2016exponential} and in \cite{xiao2019control} the authors propose a more general form of high-order control barrier functions (HOCBFs). 

However, in practice, models that are used to design controllers are imperfect and this may lead to unsafe or even dangerous behavior of the systems. Consequently, designing controllers considering uncertainty is important in practical applications. In \cite{taylor2020learning} and \cite{wang2021learning}, a bounded disturbance is considered for the model, in which the time derivative of the barrier function is separated into the time derivative of the nominal barrier function and a remainder that can be approximated using neural networks. For systems driven by Gaussian-type  noise, stochastic differential equations (SDEs) are usually used to characterize the effect of randomness. Studies for stochastic stability of diffusion-type stochastic differential equations have seen a variety of applications in verifying probabilistic quantification of safe set invariance \cite{kushner1967stochastic}. Control barrier functions for stochastic systems have also been studied in recent years. The authors in \cite{sarkar2020high} applied the strong set-invariance certificate from \cite{clark2019control} to high-order stochastic control systems using stochastic reciprocal control barrier functions (SRCBFs) and \cite{santoyo2021barrier} investigates the worst-case safety verification utilizing stochastic zeroing control barrier functions (SZCBFs) regardless of the magnitude of noise. In \cite{wang2021safety}, the authors proposed the stochastic control barrier function (SCBF) with milder conditions at the cost of sacrificing the almost sure safety. 

However, in some practical scenarios, we do not have precise information about the system and in this case, we cannot directly apply (stochastic) CBFs to control the system safely. One way of handling these unknown dynamics is to identify the model first and then applied control barrier certificates to guarantee safety.  The Matrix-Variate Gaussian process model is used to learn nonlinear control-affined discrete systems and the learned model is incorporated into a multi-agent CBF framework as in \cite{cheng2020safe}. A two-stage solution is proposed in \cite{jagtap2020control} for unknown systems, where Gaussian learning is used to learn the model first and then use CBF for safety guarantee. In \cite{dhiman2021control} and  \cite{brunke2022barrier}, Bayesian learning is used to approximate the unknown model with barrier certificates for safety-critical control. However, all of these methods only focus on deterministic systems. Data-driven methods with scenario convex programming (SCP) is also studied in guaranteeing safety of unknown dynamical systems. In \cite {chen2023data} and  \cite{salamati2021data}, random trajectories are sampled with barrier certificates as finite number of constraints in the optimization problems for unknown deterministic system and stochastic systems, respectively. However, such methods only apply to discrete systems. Formal methods with Gaussian process learning is also studied in \cite {wajid2022formal} for discrete systems as well. In \cite{wang2022data}, the authors assume that the diffusion term of the SDE is unknown and use sampling method with central limit theorem to estimate the generator the such SDEs for safety-critical control. However, such sampling-based method is very time-inefficient when the relative degree is higher than one. On the other hand, data-driven methods with Bayesian learning has been widely used to identify unknown stochastic differential equations recently as in \cite{archambeau2007gaussian} and \cite{sarkka2006recursive}. And recently, the authors in \cite{castillo2015bayesian} proposed sparse Bayesian learning architecture to handle insufficient data resources or prior information for unknown SDEs. Motivated by such results,  we consider safety-critical control problems for unknown stochastic systems. To the best of our knowledge, this is the first paper to address safety-control problems for SDEs with fully unknown drift and diffusion terms. 

In this paper, we extend our work in \cite{wang2021safety} for stochastic control barrier functions (SCBFs) to unknown SDEs. We use Bayesian learning to estimate the system and then use SCBFs on the identified system. The proposed method relies on data-driven method such that we will estimate both drift and diffusion terms based our observations of the system. We also compare the simulation results both in safety ratio and running time against our previous study in \cite{wang2021safety} and \cite{wang2022data}. As a result, the main contribution of this paper is as follows.

\begin{itemize}
    \item We propose the stochastic control barrier function with high relative degree for safety-critical control of SDEs.
    \item We propose a Bayesian learning framework for safety-critical control of unknown SDEs
    \item We validate our results and compare with our previous work in \cite{wang2021safety} and \cite{wang2022data} for safety ratio and running time. 
\end{itemize}

\section{Preliminary and Problem Definition} \label{sec:definition}
\subsection{System Description}
Given a  state space $\mathcal{X}\subseteq\R^n$, a (compact) set of control values $\U\subset \R^p$,
consider the following form of perturbed SDE:
\begin{equation}\label{E: sys}
    dX_t=f(X_t)dt+g(X_t)\uf_tdt+b(X_t)dW_t,
\end{equation}
where $W$ is an $m$-dimensional standard Wiener process;  $f:\mathcal{X}\rightarrow\R^n$ is a nonlinear local Lipschitz continuous vector field; $g, b:\mathcal{X}\rightarrow \R^{n\times m}$ are smooth mappings. The input $\uf:\R_{\geq 0}\rightarrow \U$ is a locally bounded  control signal. 


 In brief, without loss of generality, we assume that a stochastic process $X:=\{X_t\}_{t\in[0, \infty)}$ and a  process of control values $\uf:=\{\uf_t\}_{t\in[0, \infty)}$ are defined on some (unknown) probability space where the noise is generated. Given any measurable process $\uf$, the probability law of the joint process $(X,\uf):=\{(X_t,\uf_t)\}_{t\in[0, \infty)}$ can be determined on the 
           measurable space   $((\xx\times\uu)^{\otimes[0,\infty)}, \ff,  \ppp)$,  where $\ff$ is the Borel $\sigma$-algebra of $(\xx\times\uu)^{\otimes[0,\infty)}$. 
We also denote $X^\uf$ by the controlled process if we emphasize on the state-space marginal of $(X,\uf)$. 

\begin{definition}[ Control strategy]
A control strategy is a set-valued measurable function
\begin{equation}
   \kappa:\X\rightarrow 2^{\U}. 
\end{equation}
\end{definition}
We consider the following memoryless set of non-randomized control signals throughout this paper. 

\begin{definition}[State-dependent control]\label{E: control constraint}
We say that a control signal  $\uf$ conforms to a control strategy $\kappa$ for (\ref{E:sys}) if, given $\{X_t = x\}$ for any $x\in\xx$, we have that  
\begin{equation}
    \uf_t \in \kappa(x),\quad\forall t\geq 0.
\end{equation}
We further denote the set of all control signals that confirm to $\kappa$ as $\ub_\kappa$.
\end{definition}

In this view, for system \eqref{E: sys}, we only consider the weak sense of solutions. The system admits a weak solution if there exists a filtered probability space $(\Omega^\dagger,\mathscr{F}^\dagger,\{\mathscr{F}^\dagger_t\},  \ppb^\dagger)$ with a pair $(X, \uf, W)$ of adapted stochastic processes, such that $W$ is a  Wiener process and $(X, \uf)$ solves the controlled SDE \eqref{E: sys}.

Given a filtered probability space $(\Omega,\F,\{\F_t\}, \mathbb{P})$ with a natural filtration, a state space $\mathcal{X}\subseteq\R^n$, a (compact) set of control values $\U\subset \R^p$, consider a continuous-time stochastic process $X:[0,\infty)\times\Omega\rightarrow\X$ that solves the SDE
\begin{equation}\label{E:sys}
dX_t=(f(X_t)+g(X_t)u(t))dt+\sigma(X_t)dW_t,
\end{equation}
where $u:\R_{\geq 0}\rightarrow\U$ is a bounded measurable control signal; $W$ is a $d$-dimensional standard $\{\F\}_t$-Brownian motion; $f:\X\rightarrow \R^n$ is a nonlinear vector field; $g:\X\rightarrow\R^{n\times p}$ and $\sigma:\X\rightarrow\R^{n\times d}$ are smooth mappings. 

\begin{ass}\label{ass: usual}
We make the following assumptions on system \eqref{E:sys} for the rest of this paper:
\begin{itemize}
    \item[(i)] There is a $\xi\in\X$ such that $\mathbb{P}[X_0=\xi]=1$;
    \item[(ii)] The mappings $f,g,\sigma$ satisfy local Lipschitz continuity and a linear growth condition.
\end{itemize}
\end{ass}
\begin{definition}[Strong solutions]
A stochastic process $X$ is said to be a strong solution to \eqref{E:sys} if it satisfies the following integral equation
\begin{equation}
    X_t=\xi+\int_0^t(f(X_s)+g(X_s)u(s))ds+\int_0^t \sigma(X_s)dW_s, 
\end{equation}
where the stochastic integral is constructed based on the given Brownian motion $W$. 
\end{definition}

\begin{definition} [Infinitesimal generator of $X_t$]
Let $X$ be the strong solution to \eqref{E:sys}, the infinitesimal generator $\Aa$ of $X_t$ is defined by 
\begin{equation}
    \mathcal{A}h(x)=\lim\limits_{t\downarrow 0}\frac{\E^x[h(X_t)]-h(x)}{t};\;\;x\in\R^n,
\end{equation}
where $h:\R^n\rightarrow \R$ is in a set $\mathcal{D}(\Aa)$ (called the domain of the operator $\mathcal{A}$) of functions  such that the limit exists at $x$.
\end{definition}
\begin{prop}[Dynkin]
Let $X$ solve \eqref{E:sys}. If $h\in C_0^2(\mathbb{R}^n)$ then $h\in\mathcal{D}(\mathcal{A})$ and 
\begin{equation}
    \mathcal{A}h(x)=\frac{\partial h}{\partial x}(f(x)+g(x)u(t))+\frac{1}{2}\sum\limits_{i,j}\left(\sigma\sigma^T\right)_{i,j}(x)\frac{\partial^2h}{\partial x_i\partial x_j}.
\end{equation}
\end{prop}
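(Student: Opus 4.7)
The plan is to invoke Itô's formula applied to $h(X_t)$, take expectations to kill the martingale part, and then let $t\downarrow 0$ inside the resulting time integral. Because $h\in C_0^2(\R^n)$, we have good integrability: $h$, $\partial h/\partial x$, and $\partial^2 h/\partial x_i\partial x_j$ are uniformly bounded with compact support, so the usual technicalities about passing limits inside expectations become routine.

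First I would apply Itô's formula to $h(X_t)$ for the process in \eqref{E:sys}. This gives
\begin{equation*}
h(X_t) = h(x) + \int_0^t \left[\frac{\partial h}{\partial x}(X_s)\bigl(f(X_s)+g(X_s)u(s)\bigr) + \tfrac{1}{2}\sum_{i,j}(\sigma\sigma^T)_{i,j}(X_s)\frac{\partial^2 h}{\partial x_i\partial x_j}(X_s)\right]ds + \int_0^t \frac{\partial h}{\partial x}(X_s)\sigma(X_s)\,dW_s,
\end{equation*}
with $X_0=x$ under $\E^x$. Denote the bracketed integrand by $(\mathcal{L}h)(X_s,s)$.

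Next I would take expectations. The stochastic integral is a true martingale starting at zero: since $h\in C_0^2$, $\partial h/\partial x$ is bounded, and Assumption~\ref{ass: usual}(ii) together with boundedness of $u$ controls $\sigma(X_s)$ in $L^2$, so the Itô integral has zero mean. Hence
\begin{equation*}
\E^x[h(X_t)] - h(x) = \E^x\!\left[\int_0^t (\mathcal{L}h)(X_s,s)\,ds\right] = \int_0^t \E^x\bigl[(\mathcal{L}h)(X_s,s)\bigr]\,ds,
\end{equation*}
where Fubini is justified by boundedness of the integrand (on compacts in $s$, with $u$ locally bounded).

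Finally I would divide by $t$ and pass to the limit $t\downarrow 0$. By continuity of $f,g,\sigma$, of the derivatives of $h$, and right-continuity of $X_s$ at $s=0$ with $X_0=x$, the integrand $\E^x[(\mathcal{L}h)(X_s,s)]\to (\mathcal{L}h)(x,0^+)$ as $s\downarrow 0$; applying the fundamental theorem of calculus (or dominated convergence using the uniform bound on $\mathcal{L}h$ from compact support of $h$'s derivatives) yields
\begin{equation*}
\lim_{t\downarrow 0}\frac{\E^x[h(X_t)]-h(x)}{t} = \frac{\partial h}{\partial x}(x)\bigl(f(x)+g(x)u(t)\bigr) + \tfrac{1}{2}\sum_{i,j}(\sigma\sigma^T)_{i,j}(x)\frac{\partial^2 h}{\partial x_i\partial x_j}(x),
\end{equation*}
which is the claimed formula and simultaneously shows $h\in\mathcal{D}(\mathcal{A})$.

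The main obstacle is the martingale-and-Fubini step: one needs to verify that $\int_0^t \frac{\partial h}{\partial x}(X_s)\sigma(X_s)\,dW_s$ is a genuine (not just local) martingale. Compact support of $h$ gives a uniform $L^\infty$ bound on $\partial h/\partial x$, and the linear growth of $\sigma$ combined with standard $L^2$ estimates on SDE solutions under Assumption~\ref{ass: usual} supplies a finite $\E\int_0^t \|\partial_x h\,\sigma\|^2 ds$, which closes this gap cleanly.
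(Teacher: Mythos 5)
Your proposal is correct: it is the standard Itô-formula argument (apply Itô to $h(X_t)$, use compact support of $h$ to make the stochastic integral a genuine zero-mean martingale, then divide by $t$ and pass to the limit by dominated convergence), which is exactly the classical proof of this characterization of the generator (cf.\ \O{}ksendal). The paper itself states this proposition as a known result without giving a proof, so there is nothing to compare against; the only caveat, which the statement in the paper shares, is that for a merely bounded measurable signal $u$ the right limit $u(0^+)$ need not exist, so the formula with $u(t)$ is really meaningful for the state-dependent (memoryless) controls the paper works with, where the drift is a continuous function of the state.
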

\begin{remark}
The solution $X$ to \eqref{E:sys} is right continuous and satisfies strong Markov properties, and for any finite stopping time $\tau$ and $h\in C_0^2(\R^n)$, we have the following Dynkin's formula
$$\E^\xi[h(X_\tau)]=h(\xi)+\E^\xi\left[\int_0^\tau \Aa h(X_s)ds\right],$$
and therefore
$$h(X_\tau)=h(\xi)+\int_0^t\Aa h(X_s)ds+\int_0^\tau \nabla_x h(X_s)dW_s.$$
The above is an analogue of the evolution of $h$ along trajectories
$$h(x(t))=h(\xi)+\int_0^t [L_fh(x(s))+L_gh(x(s))u(s)]ds$$
driven by deterministic dynamics $\dot{x}=f(x)+g(x)u$, where $L_fh=\nabla_x h(x)\cdot f(x)$ and $L_gh=\nabla_x h(x)\cdot g(x)$.
\end{remark}

\subsection{Set Invariance and Control}
In deterministic settings, a set $\mathcal{C}\subseteq \X$ is said to be invariant for a dynamical system $\dot{x}=f(x)$ if, for all $x(0)\in\mathcal{C}$, the solution $x(t)$ is well defined and $x(t)\in \mathcal{C}$ for all $t\geq 0$. As for stochastic analogies, we have the following probabilistic characterization of set invariance.
\begin{definition}[Probabilistic set invariance]
Let $X$ be a stochastic process. A set $\mathcal{C}\subset\X$ is said to be invariant w.r.t. a tuple $(x, T, p)$ for $X$, where $x\in \mathcal{C}$, $T\ge 0$, and $p\in[0,1]$, if $X_0=x$ a.s. implies 
\begin{equation}
    \mathbb{P}_x[X_t\in\mathcal{C}, \;0\leq t\leq T]\geq p.
\end{equation}
Moreover, if $\mathcal{C}\subset\X$ is invariant w.r.t. $(x, T, 1)$ for all $x\in\mathcal{C}$ and $T\geq 0$, then $\mathcal{C}$ is strongly invariant for $X$.
\end{definition}

For stochastic dynamical systems with controls such as system \eqref{E:sys}, we would like to define similar probabilistic set invariance property for the controlled processes. Before that, we first define the
following concepts.
\begin{definition}[Control strategy]
A control strategy is a set-valued function
\begin{equation}
   \kappa:\X\rightarrow 2^{\U}. 
\end{equation}
\end{definition}

\begin{definition}[Controlled probabilistic invariance]
Given system \eqref{E:sys} and a set of control signals $\ub$, a set $\mathcal{C}\subset\X$ is said to be controlled invariant under $\ub$ w.r.t. a tuple $(x, T, p)$ for system \eqref{E:sys}, if for all $u\in\ub$, $\mathcal{C}$ is invariant w.r.t. $(x, T, p)$ for $X$, where $X$ is the solution to \eqref{E:sys} with $u$ as input. 

Similarly, $\mathcal{C}\subset\X$ is strongly controlled invariant under $\ub$ if $\mathcal{C}\subset\X$ is controlled invariant under $\ub$ w.r.t. $(x, T, 1)$ for all $x\in\mathcal{C}$ and $T\geq 0$. 
\end{definition}

\begin{subsection}{Problem Formulation}
For the rest of this paper, we consider a safe set of the form
\begin{equation}\label{E: safeset}
    \mathcal{C}:=\{x\in\X: h(x)\geq 0\},
\end{equation}where $h\in C^2(\R^n)$. We also define the boundary and interior of $\mathcal{C}$ explicitly as below
\begin{equation}
    \partial\mathcal{C}:=\{x\in\X: h(x)= 0\},
\end{equation}
\begin{equation}
    \mathcal{C}^\circ:=\{x\in\X: h(x)> 0\}.
\end{equation}

The objective of this paper is to control the stochastic system (\ref{E:sys}) with unknown drift and diffusion terms to stay inside the safe set. The problem is defined as follows.

\begin{prob}\label{prob:main}
Given system \eqref{E:sys} with $f$, $g$ and $\sigma$ unknown, a compact set $\mathcal{C}\subseteq\X$ defined by \eqref{E: safeset}, a point $x\in\mathcal{C}^\circ$, and a $p\in[0,1]$, design a (deterministic) control strategy $\kappa$ such that under $\ub_\kappa$, the interior $\mathcal{C}^\circ$ is controlled $p$-invariant for the resulting  solutions to \eqref{E:sys}.
\end{prob}
\end{subsection}

\begin{remark}
In our previous paper \cite{wang2022data}, we assume that we know the drift term of the system and only the diffusion term $\sigma$ is unknown. In this paper, we extend our assumption that the system is fully unknown. 
\end{remark}

\section{Safe-critical Control Design via Barrier Functions}
Before proposing our stochastic barrier certificates that can be used to design a
control strategy $\kappa$ for Problem~\ref{prob:main}, it is necessary
to review (stochastic) control barrier functions to interpret (probabilistic) 
set invariance. Note that we consider the safe set as constructed in \eqref{E: safeset}, where the function $h$ is given a priori. 
\subsection{Stochastic Reciprocal and Zeroing Barrier Functions}
Similar to the terminology for deterministic cases \cite{ ames2016control}, we introduce the construction of stochastic control barrier functions as follows.
\begin{definition}[SRCBF]
A function $B:\mathcal{C}^\circ\rightarrow \R$ is called a stochastic reciprocal control barrier function (SRCBF) for system \eqref{E:sys} if $B\in\mathcal{D}(\mathcal{A})$ and satisfies the following properties:
\begin{itemize}
    \item[(i)] there exist class-$\mathcal{K}$ functions $\alpha_1,\alpha_2$ such that for all $x\in\X$ we have
    \begin{equation}
        \frac{1}{\alpha_1(h(x))}\leq B(x)\leq \frac{1}{\alpha_2(h(x))};
    \end{equation}
    \item[(ii)] there exists a class-$\mathcal{K}$ function $\alpha_3$ such that
    \begin{equation}\label{E: rcbf strate}
        \inf\limits_{u\in\U}[\mathcal{A}B(x)-\alpha_3(h(x))]\leq 0.
    \end{equation}
\end{itemize}
We refer to the control strategy generated by \eqref{E: rcbf strate} as 
\begin{equation}
    \rho(x):=\{u\in\U: \mathcal{A}B(x)-\alpha_3(h(x))\leq 0\}
\end{equation}
and the corresponding control constraint  as $\ub_{\rho}$ (see in Definition \ref{E: control constraint}).
\end{definition}
\begin{prop}[\cite{clark2019control}]
Suppose that there exists an SRCBF for system \eqref{E:sys}. If $u(t)\in\ub_\upsilon$, then for all $t\geq 0$ and  $X_0=\xi\in\mathcal{C}^\circ$, we have $\mathbb{P}_\xi[X_t\in\mathcal{C}^\circ]=1$ for all $t\geq 0$.
\end{prop}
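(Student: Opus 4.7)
The strategy is the classical reciprocal-barrier argument adapted to the stochastic setting: use the lower bound $B(x)\geq 1/\alpha_1(h(x))$ so that blow-up of $B$ along the sample path is the only way to exit $\mathcal{C}^\circ$, then use the infinitesimal condition $\mathcal{A}B\leq \alpha_3(h)$ to show that $B(X_t)$ cannot blow up in finite time with positive probability. Throughout, we fix $u\in\ub_\rho$ so that, under the controlled process, $\mathcal{A}B(X_s)\leq \alpha_3(h(X_s))$ whenever $X_s\in\mathcal{C}^\circ$.

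First I would introduce the exit time $\tau:=\inf\{t\geq 0:X_t\notin \mathcal{C}^\circ\}$ and, for each $n\in\mathbb{N}$, the localizing stopping time $\tau_n:=\inf\{t\geq 0:B(X_t)\geq n\}$. Because $B(x)\geq 1/\alpha_1(h(x))$, every path for which $X_t\to \partial\mathcal{C}$ must drive $B(X_t)\to\infty$, so $\tau_n\nearrow\tau$ as $n\to\infty$. On the stochastic interval $[0,\tau_n]$ the process $X$ remains in a sublevel set of $B$ which is compactly contained in $\mathcal{C}^\circ$, so $B$ is $C^2$ and bounded there and Dynkin's formula from the remark applies cleanly to give
\begin{equation*}
    \E^\xi\bigl[B(X_{t\wedge\tau_n})\bigr]=B(\xi)+\E^\xi\!\left[\int_0^{t\wedge\tau_n}\mathcal{A}B(X_s)\,ds\right].
\end{equation*}

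Next I would plug in the SRCBF inequality. Since $X_s\in\mathcal{C}^\circ$ on $[0,\tau_n]$ and since $\mathcal{C}$ is compact (so $h$ is bounded above by some $h_{\max}$ on $\mathcal{C}$), the class-$\mathcal{K}$ function $\alpha_3$ gives $\mathcal{A}B(X_s)\leq \alpha_3(h_{\max})=:M$. Therefore
\begin{equation*}
    \E^\xi\bigl[B(X_{t\wedge\tau_n})\bigr]\leq B(\xi)+Mt.
\end{equation*}
Because $B\geq 0$ and $B(X_{\tau_n})\geq n$ on the event $\{\tau_n\leq t\}$, Markov's inequality yields
\begin{equation*}
    n\,\ppb_\xi[\tau_n\leq t]\leq \E^\xi\bigl[B(X_{t\wedge\tau_n})\bigr]\leq B(\xi)+Mt,
\end{equation*}
so $\ppb_\xi[\tau_n\leq t]\leq (B(\xi)+Mt)/n\to 0$. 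Sending $n\to\infty$ gives $\ppb_\xi[\tau\leq t]=0$, and since $t$ was arbitrary, $\ppb_\xi[\tau=\infty]=1$, which is the desired conclusion $\ppb_\xi[X_t\in\mathcal{C}^\circ]=1$ for every $t\geq 0$.

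The main obstacle I anticipate is the justification of the Dynkin step: $B$ is only defined on $\mathcal{C}^\circ$ and blows up at $\partial\mathcal{C}$, so one cannot naively apply Dynkin's formula globally. The localization by $\tau_n$ is precisely what circumvents this, and it is important to verify that $\tau_n\nearrow\tau$ uses the two-sided comparison between $B$ and $1/h$ rather than just the upper bound. A secondary subtlety is whether one can assume $\alpha_3\circ h$ is bounded along the path; compactness of $\mathcal{C}$ (as stated in Problem~\ref{prob:main}) makes this free, but if one wishes to drop compactness, a Gronwall-type bound $\mathcal{A}B(x)\leq c(1+B(x))$ would replace the constant $M$ and the remainder of the argument would go through with an exponential factor.
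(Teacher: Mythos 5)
The paper itself gives no proof of this proposition---it is imported verbatim from \cite{clark2019control}---so there is no in-paper argument to compare against; judged on its own, your reconstruction is the standard reciprocal-barrier argument and is essentially sound under the paper's standing assumptions. The key chain works: for $u\in\ub_\rho$ (the proposition's $\ub_\upsilon$ is a typo for the SRCBF strategy $\rho$), the constraint gives $\mathcal{A}B(X_s)\le\alpha_3(h(X_s))$ pathwise; the lower bound $B\ge 1/\alpha_1(h)$ forces $B(X_s)\to\infty$ before any exit of $\mathcal{C}^\circ$, hence $\{\tau\le t\}\subseteq\{\tau_n\le t\}$; and the localized Dynkin formula plus Markov's inequality yields $\ppb_\xi[\tau_n\le t]\le (B(\xi)+Mt)/n\to 0$. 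Two refinements worth noting. First, for the inclusion $\{\tau\le t\}\subseteq\{\tau_n\le t\}$ only the lower bound on $B$ is needed, not the two-sided comparison you flag; the upper bound $B\le 1/\alpha_2(h)$ is what guarantees $B(\xi)<\infty$ and that $B$ cannot blow up strictly inside $\mathcal{C}^\circ$ (so the localization exhausts $[0,\tau)$), and it is the lower bound that puts the stopped process inside $\{h\ge\alpha_1^{-1}(1/n)\}$, a compact subset of $\mathcal{C}^\circ$ on which $B$ can be modified to lie in $C_0^2$ so the paper's Dynkin statement applies. Second, the constant $M=\alpha_3(h_{\max})$ does rely on compactness of $\mathcal{C}$ (as assumed in Problem~\ref{prob:main}); your Gronwall-type fallback $\mathcal{A}B\le c(1+B)$ is the right fix if that is dropped. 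With these caveats your argument is a faithful, self-contained version of the cited result, so I see no genuine gap.
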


\begin{definition}[SZCBF]
A function $B:\mathcal{C}\rightarrow \R$ is called a stochastic zeroing control barrier function (SZCBF) for system \eqref{E:sys} if $B\in\mathcal{D}(\mathcal{A})$ and 
\begin{enumerate}
    \item[(i)] $B(x)\geq 0$ for all $x\in\mathcal{C}$;
    \item[(ii)] $B(x)<0$ for all $x\notin \mathcal{C}$;
    \item[(iii)] there exists an extended $\mathcal{K}_{\infty}$ function $\alpha$ such that
\begin{equation}\label{E: zcbf certificate}
    \sup_{u\in\U}{[\mathcal{A}B(x)}+\alpha(B(x))]\geq 0.
\end{equation}
\end{enumerate}
We refer the control strategy generated by \eqref{E: zcbf certificate} as 
\begin{equation}
    \varkappa(x):=\{u\in\U: {\mathcal{A}B(x)}+\alpha(B(x))\geq 0\}
\end{equation}
and the corresponding set of constrained control signals  as $\ub_{\varkappa}$.
\end{definition}
\begin{prop}[Worst-case probabilistic quantification]
Suppose the mapping $h$ is an SZCBF with linear function $kx$ as the class-$\mathcal{K}$ function (where $k>0$),  and the control strategy as $\varkappa(x)=\{u\in\U: \mathcal{A}h(x)+kh(x)\geq 0\}$. Let $c=\sup_{x\in\mathcal{C}}h(x)$ and  $X_0=\xi\in\mathcal{C}^\circ$, then under any $u\in\ub_{\varkappa}$ 
we have the following worst-case probability estimation:
\begin{equation}
    \mathbb{P}_{\xi}\left[X_t\in\mathcal{C}^\circ,\;0\leq t\leq T\right]\geq \left(\frac{h(\xi)}{c}\right)e^{-cT}.
\end{equation}
\end{prop}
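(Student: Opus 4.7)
The plan is to convert the pointwise inequality $\mathcal{A}h(x)+kh(x)\geq 0$, which holds for every $u\in\varkappa(x)$, into a probabilistic lower bound via an exponentially weighted process and the optional stopping theorem. First, I introduce the first exit time from the open safe set, $\tau:=\inf\{t\geq 0:X_t\notin\mathcal{C}^\circ\}$, so that the event whose probability we wish to lower-bound is exactly $\{\tau>T\}$. Because $h\in C^2$ and the sample paths of $X$ are continuous, we will have $h(X_\tau)=0$ on $\{\tau<\infty\}$.

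Next, I apply It\^o's formula (equivalently, a time-dependent version of Dynkin's formula) to $Y_t:=e^{kt}h(X_t)$. The drift coefficient that results is $e^{kt}(\mathcal{A}h(X_t)+kh(X_t))$, which is non-negative for every admissible $u\in\ub_\varkappa$ by the definition of the strategy. On the stopped interval $[0,\tau]$ the process $X$ remains inside the compact set $\mathcal{C}$, so $h$, $\nabla h$, and $\sigma$ are all uniformly bounded along the trajectory; hence the stochastic integral in the It\^o decomposition is a genuine martingale, and the stopped process $Y_{t\wedge\tau}$ is a true submartingale.

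The optional stopping theorem applied at the bounded time $T$ then gives
\[
h(\xi)=Y_0\leq \E^\xi\!\left[Y_{T\wedge\tau}\right]=\E^\xi\!\left[e^{k(T\wedge\tau)}h(X_{T\wedge\tau})\right].
\]
Splitting the expectation according to the two disjoint events $\{\tau\leq T\}$ and $\{\tau>T\}$, the integrand vanishes on the former because $h(X_\tau)=0$, while on the latter it is bounded above by $e^{kT}c$ since $c=\sup_{x\in\mathcal{C}}h(x)$. This yields $h(\xi)\leq e^{kT}c\,\mathbb{P}_\xi[\tau>T]$, and rearranging delivers the advertised bound (with $e^{-kT}$ matching the linear gain $k$ of the strategy).

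The step I expect to require the most care is the justification of the submartingale property and of the optional-stopping identity: one must argue that up to $\tau$ the trajectory is confined to the compact set $\mathcal{C}$ so that the local-martingale piece has bounded quadratic variation, and then verify that Assumption~1 together with this compactness is enough to invoke Dynkin's formula at the bounded horizon $T\wedge\tau$. Once these technicalities are dispatched, everything else reduces to the short algebraic case split above.
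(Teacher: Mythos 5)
Your argument is correct, and it takes a genuinely different route from the paper's. The paper substitutes $V=c-h$, observes that under $\ub_\varkappa$ one has $\mathcal{A}V\le -kV+kc$, and then quotes Kushner's finite-horizon supermartingale-type inequality to get $\mathbb{P}_\xi[\sup_{0\le t\le T}V(X_t)\ge c]\le 1-\left(1-\frac{s}{c}\right)e^{-(\cdot)T}$, from which the claim follows by complementation. You instead prove the estimate from scratch: the exponentially weighted process $e^{kt}h(X_t)$, stopped at the first exit time $\tau$ from $\mathcal{C}^\circ$, has nonnegative drift $e^{kt}(\mathcal{A}h+kh)\ge 0$ under the strategy, is bounded up to $\tau$ because the trajectory stays in the compact set $\mathcal{C}$, hence is a true submartingale; optional sampling at $T\wedge\tau$ together with the split over $\{\tau\le T\}$ (where $h(X_\tau)=0$) and $\{\tau>T\}$ (where the integrand is at most $ce^{kT}$) gives $h(\xi)\le ce^{kT}\,\mathbb{P}_\xi[\tau>T]$. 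Your route buys a self-contained proof in which the integrability and stopping issues are made explicit, while the paper buys brevity by outsourcing exactly that martingale work to the cited theorem. The small point that $\{X_t\in\mathcal{C}^\circ,\,0\le t\le T\}\supseteq\{\tau>T\}$ (rather than exact equality) is harmless, since you only need the lower bound.

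The one substantive discrepancy is the constant in the exponent: your argument delivers $\frac{h(\xi)}{c}e^{-kT}$, whereas the proposition (and the paper's displayed application of Kushner's bound) reads $e^{-cT}$. Note that Kushner's inequality for $\mathcal{A}V\le-\alpha V+\beta$ with threshold $\lambda$ has exponent $\beta T/\lambda$; with $\alpha=k$, $\beta=kc$, $\lambda=c$ this is $kT$, i.e., exactly your constant, and the two statements coincide only when $k=c$. Your constant is also the one consistent with the qualitative behavior of the strategy: as $k\to\infty$ the constraint $\mathcal{A}h\ge-kh$ becomes vacuous, so a $k$-independent worst-case bound such as $e^{-cT}$ cannot hold for all $k$, while a bound degrading like $e^{-kT}$ can. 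So strictly speaking your proof does not establish the printed $e^{-cT}$ bound when $k>c$, but it does establish what appears to be the intended (and correct) estimate; state the $e^{-kT}$ version explicitly and flag the mismatch rather than claiming the displayed inequality verbatim.
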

\begin{proof}
Let $s=c-h(\xi)$ and $V(x)=c-h(x)$, then $V(x)\in [0,c]$ for all $x\in \mathcal{C}$. It is clear that $\mathcal{A}V(x)=-\mathcal{A}h(x)$. For 
 $u(t)\in\ub_\varkappa$ for all $t\in[0,T]$, we have
$$\mathcal{A}V(x)\leq -kV(x)+kc.$$
By \cite[Theorem 3.1]{kushner1967stochastic}, 
 \begin{equation}
     \begin{split}
        \mathbb{P}_{\xi}\left[\sup\limits_{t\in[0,T]}V(X_t)\geq c\right]\leq 1-\left(1-\frac{s}{c}\right)e^{-cT}. 
     \end{split}
 \end{equation}
The result follows directly after this.
\end{proof}

\subsection{Motivation of Stochastic Control Barrier Functions}
The authors in \cite{sarkar2020high} constructed high-order SRCBF and have found the sufficient conditions to guarantee pathwise set invariance with probability 1.
While the results seem strong, they come with significant costs. At the safety boundary, the control inputs need to be unbounded in order to guarantee safety. (A motivated example is illustrated in \cite{wang2021safety}). On the other hand, the synthesis of controller for a high-order system via SZCBF is with mild constraints. The trade-off is that the probability estimation of set invariance is of low quality. As a result, we propose high-order stochastic control barrier functions in this paper in order to reduce the high control efforts and improve the worst-case quantification.

\subsection{High-order Stochastic Control Barrier Functions}
To obtain non-vanishing worst-case probability estimation (compared to SZCBF), we propose a safety certificate via a stochastic Lyapunov-like control barrier function \cite{kushner1967stochastic}.
\begin{definition}[Stochastic control barrier functions] A continuously differentiable function $B:\R^n\rightarrow \R$ is said to be a stochastic control barrier function (SCBF) if $B\in\mathcal{D}(\mathcal{A})$ and the following conditions are satisfied:
\begin{enumerate}
    \item[(i)] $B(x)\geq 0$ for all $x\in\mathcal{C}$;
    \item[(ii)] $B(x)<0$ for all $x\notin \mathcal{C}$;
    \item[(iii)] $
        \sup\limits_{u\in\U}\mathcal{A}B(x)\geq 0
    $.
\end{enumerate}
We refer the control strategy generated by (iii) as 
\begin{equation}
    \upsilon(x):=\{u\in\U: \mathcal{A}B(x)\geq 0\}
\end{equation}
and the corresponding set of constrained control signals  as $\ub_{\upsilon}$.
\end{definition}

\begin{remark}
Condition (iii) of the above definition is an analogue of $\sup\limits_{u\in\U}[L_fB(x)+L_gB(x)u]\geq 0$ for the deterministic settings. The consequence is such that $\mathbb{E}^\xi[B(x)]\geq B(x)$ for all $x\in\mathcal{C}$. A relaxation of condition (iii) is given in the deterministic case such that the set invariance can still be guaranteed \cite{wang2020learning},
$$\sup\limits_{u\in\U}[L_fB(x)+L_gB(x)u]\geq -\alpha(B(x)), $$
where $\alpha$ is a class-$\mathcal{K}$ function. If $\alpha(x)=kx$ where $k>0$, under the stochastic settings, the condition formulates an SZCBF and provides a much weaker quantitative estimation of the lower bound of satisfaction probability. In comparison with SZCBF, we provide the worst-case quantification in the following proposition.
\end{remark}
\begin{prop}\label{prop: main}
Suppose the mapping $h$ is an SCBF  with the corresponding control strategy  $\upsilon(x)$. Let $c=\sup_{x\in\mathcal{C}}h(x)$ and $X_0=\xi\in\mathcal{C}^\circ$, then under the set of constrained control signals $\ub_\upsilon$, we have the following worst-case probability estimation:
$$
\mathbb{P}_{\xi}\left[X_t\in\mathcal{C}^\circ,\;0\leq t<\infty\right]\geq \frac{h(\xi)}{c}.
$$
\end{prop}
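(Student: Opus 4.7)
The plan is to mimic the proof of the worst-case quantification for SZCBF, but exploit the stronger condition $\mathcal{A}h(x)\geq 0$ (rather than $\mathcal{A}h(x)\geq -kh(x)$) to obtain a non-vanishing bound as $T\to\infty$. Concretely, I would convert the SCBF into a nonnegative supermartingale defined on the safe set and then apply a Doob-type maximal inequality.

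First, set $V(x):=c-h(x)$, so that $V:\mathcal{C}\to[0,c]$, $V(\xi)=c-h(\xi)$, and $V(x)=c$ exactly when $x\in\partial\mathcal{C}$. Because $\mathcal{A}$ is linear and annihilates constants, condition (iii) of the SCBF definition together with $u(t)\in\ub_\upsilon$ yields $\mathcal{A}V(X_t)=-\mathcal{A}h(X_t)\leq 0$ on $\mathcal{C}^\circ$. Define the first-exit time $\tau:=\inf\{t\geq 0: X_t\notin \mathcal{C}^\circ\}$; by right-continuity of $X$, on $\{\tau>t\}$ we have $X_s\in\mathcal{C}^\circ$ for all $s\leq t$, so the control constraint is active throughout $[0,\tau)$ and Dynkin's formula applies to the stopped process.

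Next, apply Dynkin's formula to $V\in C^2$ at the bounded stopping time $t\wedge\tau$:
\begin{equation*}
\mathbb{E}^\xi\bigl[V(X_{t\wedge\tau})\bigr]=V(\xi)+\mathbb{E}^\xi\!\left[\int_0^{t\wedge\tau}\mathcal{A}V(X_s)\,ds\right]\leq V(\xi),
\end{equation*}
and iterating this between conditional times shows that $\{V(X_{t\wedge\tau})\}_{t\geq 0}$ is a nonnegative supermartingale with initial value $V(\xi)=c-h(\xi)$. Now invoke the supermartingale maximal inequality (this is the content of \cite[Theorem~3.1]{kushner1967stochastic}, already used in the SZCBF proof): for any $\lambda>0$,
\begin{equation*}
\mathbb{P}_\xi\!\left[\sup_{t\geq 0} V(X_{t\wedge\tau})\geq \lambda\right]\leq \frac{V(\xi)}{\lambda}.
\end{equation*}
Taking $\lambda=c$ gives $\mathbb{P}_\xi[\sup_{t\geq 0}V(X_{t\wedge\tau})\geq c]\leq (c-h(\xi))/c$, hence
\begin{equation*}
\mathbb{P}_\xi\!\left[\sup_{t\geq 0}V(X_{t\wedge\tau})<c\right]\geq \frac{h(\xi)}{c}.
\end{equation*}

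Finally, translate this back to the original event. On $\{\tau<\infty\}$ one has $V(X_\tau)=c$ by continuity of sample paths and the definition of $\tau$, so $\{\sup_{t}V(X_{t\wedge\tau})<c\}\subseteq\{\tau=\infty\}=\{X_t\in\mathcal{C}^\circ\text{ for all }t\geq 0\}$. Combining this inclusion with the bound above delivers the claimed estimate. The main obstacle I anticipate is the subtlety around continuity at the boundary: we need to know $h(X_\tau)=0$ on $\{\tau<\infty\}$ so that the supermartingale maximum is strictly below $c$ iff the trajectory never leaves $\mathcal{C}^\circ$; this is exactly the almost-sure continuity of the SDE solution under Assumption~\ref{ass: usual}, so it is available but should be stated explicitly.
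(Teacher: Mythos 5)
Your proposal is correct and follows essentially the same route as the paper: both set $V=c-h$, use the control constraint to make $V$ a nonnegative supermartingale along the (stopped) trajectory, and apply the supermartingale maximal inequality to get $\mathbb{P}_\xi[\sup V\geq c]\leq (c-h(\xi))/c$. The only difference is presentational: the paper invokes \cite[Lemma 2.1]{kushner1967stochastic} on finite horizons and lets $T\to\infty$, whereas you unpack that lemma explicitly via Dynkin's formula at the stopped time and Doob's inequality, which is a sound and slightly more self-contained rendering of the same argument.
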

\begin{proof}
Let $V=c-h(x)$, then $V(x)\geq 0$ for all $x\in\X$ and $\mathcal{A}V\leq 0$. Let $s=c-h(\xi)$ then $s\leq c$ by definition.  
The result for every finite time interval $t\in[0,T]$ is followed by \cite[Lemma 2.1]{kushner1967stochastic},
$$\mathbb{P}_{\xi}\left[X_t\in\mathcal{C}^\circ,\;0\leq t<T\right]\geq 1-\frac{s}{c}.$$
The result follows by letting $T\rightarrow\infty$.
\end{proof}
\begin{definition}
A function $B:\X\rightarrow \R$ is called a stochastic control barrier function with relative degree $r$ for system \eqref{E:sys} if $B\in\mathcal{D}(\mathcal{A}^r)$, and  $\mathcal{A}\circ\mathcal{A}^{r-1}h(x)\neq 0$ as well as  $\mathcal{A}\circ\mathcal{A}^{j-1}h(x)=0$ for $j=1,2,\ldots,r-1$ and $x\in \mathcal{C}$.
\end{definition}

If the system \eqref{E:sys} is an $r^{\text{th}}$-order stochastic control system, to steer the process $X$ to satisfy probabilistic set invariance w.r.t. $\mathcal{C}$, we recast the mapping $h$ as an SCBF with relative degree $r$. For $h\in\mathcal{D}(\mathcal{A}^r)$, 
we define a series of functions $b_0,b_j:\X\to\mathbb{R}$ such that for each $j=1,2,\dots,r$ $b_0,b_j\in\mathcal{D}(\mathcal{A})$  and
\begin{equation}\label{eq:hocbf}
\begin{split}
    b_0(x)&=h(x),\\
    b_j(x)&=\mathcal{A}\circ \mathcal{A}^{j-1}b_0(x).\\
\end{split}
\end{equation}
 We further define the corresponding superlevel sets $\mathcal{C}_j$ for $j=1,2,\dots,r$ as
\begin{equation}\label{eq:hoset}
\begin{split}
  \mathcal{C}_j&=\{x\in\mathbb{R}^n:b_{j}(x)\geq 0\}.\\
\end{split}
\end{equation}
\begin{thm}\label{thm}
If the mapping $h$ is an SCBF with relative degree $r$, the corresponding control strategy is given as $\upsilon(x)=\{u\in\U: \mathcal{A}^rh(x)\geq 0\}$. Let $c_j=:\sup_{x\in\mathcal{C}_j}b_j(x)$ for each $j=0,1,...,r$ and $X_0=\xi\in\bigcap_{j=0}^r \mathcal{C}_j^\circ$. Then under the set of constrained control signals $\ub_\upsilon$, we have the following worst-case probability estimation:
$$\mathbb{P}_\xi[X_t\in\mathcal{C}^\circ,\;0\leq t<\infty]\geq \prod\limits_{j=0}^{r-1} \frac{b_j(\xi)}{c_j}.$$
\end{thm}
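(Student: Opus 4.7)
The natural strategy is induction on the relative degree $r$, with Proposition~\ref{prop: main} providing both the base case $r=1$ and the workhorse at every inductive step. The intuition is that under $\ub_\upsilon$ the constraint $\mathcal{A}^rh\geq 0$ directly turns $b_{r-1}$ into an SCBF for $\mathcal{C}_{r-1}$, while for each lower level $j<r-1$ the identity $\mathcal{A}b_j=b_{j+1}$ implies that $b_j$ behaves as an SCBF for $\mathcal{C}_j$ only \emph{along trajectories} that remain in $\mathcal{C}_{j+1}^\circ$. The product $\prod_{j=0}^{r-1}b_j(\xi)/c_j$ should therefore appear as one Proposition~\ref{prop: main} factor per level of the relative-degree chain, assembled by the chain rule for conditional probabilities.

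Writing $A_j:=\{X_t\in\mathcal{C}_j^\circ \text{ for all } t\geq 0\}$ and noting $A_0\supseteq\bigcap_{j=0}^{r-1}A_j$, the plan is to bound
\begin{equation*}
\mathbb{P}_\xi\Bigl[\bigcap_{j=0}^{r-1}A_j\Bigr]=\mathbb{P}_\xi[A_{r-1}]\prod_{j=0}^{r-2}\mathbb{P}_\xi\bigl[A_j\,\big|\,A_{j+1}\cap\cdots\cap A_{r-1}\bigr]
\end{equation*}
factor by factor. The outer factor $\mathbb{P}_\xi[A_{r-1}]\geq b_{r-1}(\xi)/c_{r-1}$ is an immediate application of Proposition~\ref{prop: main} to $b_{r-1}$, since the control constraint gives $\mathcal{A}b_{r-1}=\mathcal{A}^rh\geq 0$ and $b_{r-1}(\xi)>0$ by assumption. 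For each inner factor I would introduce the stopping time $\sigma_{j+1}:=\inf\{t:X_t\notin\mathcal{C}_{j+1}^\circ\cap\cdots\cap\mathcal{C}_{r-1}^\circ\}$. On $[0,\sigma_{j+1})$ every $b_k$ with $k>j$ is strictly positive, so $\mathcal{A}b_j(X_t)=b_{j+1}(X_t)>0$, which means the stopped process $c_j-b_j(X_{t\wedge\sigma_{j+1}})$ is a nonnegative supermartingale bounded above by $c_j$. Doob's maximal inequality (exactly as used in the proof of Proposition~\ref{prop: main}) then yields $\mathbb{P}_\xi[b_j(X_{t\wedge\sigma_{j+1}})>0 \text{ for all } t]\geq b_j(\xi)/c_j$.

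The main obstacle is converting this stopped-process bound into the genuine conditional probability $\mathbb{P}_\xi[A_j\mid A_{j+1}\cap\cdots\cap A_{r-1}]$, because $\{\sigma_{j+1}=\infty\}$ is not a stopping-time event and the stopped event strictly contains the intersection of interest (it also includes paths that exit $\bigcap_{k>j}\mathcal{C}_k^\circ$ while $b_j$ is still positive). The cleanest remedy is a finite-horizon argument: apply Doob's inequality to $X_{t\wedge\sigma_{j+1}\wedge T}$ on $[0,T]$, invoke the strong Markov property at $\sigma_{j+1}\wedge T$ together with a regular conditional distribution to restrict to paths still inside $\bigcap_{k>j}\mathcal{C}_k^\circ$, and pass to the limit $T\to\infty$ as in the last step of the proof of Proposition~\ref{prop: main}. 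Multiplying the resulting conditional lower bounds through the chain-rule decomposition produces the product $\prod_{j=0}^{r-1}b_j(\xi)/c_j$, and the containment $A_0\supseteq\bigcap_jA_j$ closes the proof.
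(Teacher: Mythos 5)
Your skeleton is essentially the paper's: a conditioning (chain-rule) decomposition along the relative-degree chain, with one application of the Kushner-type supermartingale bound of Proposition~\ref{prop: main} per level $b_j$, the top level handled directly because the QP constraint enforces $\mathcal{A}b_{r-1}=\mathcal{A}^r h\ge 0$, and the identity $\mathcal{A}b_j=b_{j+1}$ supplying the drift condition at the lower levels. The decomposition itself differs in a useful way: you condition each $A_j$ on the full intersection $A_{j+1}\cap\dots\cap A_{r-1}$, so no complementary term ever appears (you only need $A_0\supseteq\bigcap_{j}A_j$), whereas the paper conditions $p_j$ only on $\{X_t\in\mathcal{C}_{j+1}^\circ\}$ and then devotes most of its proof to arguing, via the stopping times $\tau_j$, a worst-case assumption $\mathcal{A}b_j\le 0$ after $\tau_{j+1}$, and Doob's supermartingale inequality, that the complementary term $\mathbb{P}_\xi[X_t\in C_j^\circ\mid\mathds{1}_{\{X_t\notin C_{j+1}^\circ\}}]$ can be taken to be zero, yielding the recursion $p_j\ge\hat p_j\,p_{j+1}$. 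Your bookkeeping is cleaner; the paper's version makes the worst-case nature of the bound explicit (cf.\ Remark~\ref{rem: thm}).

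The obstacle you flag is genuine, and it is exactly the step the paper also glosses over: Proposition~\ref{prop: main} is proved under the unconditional law, while the factor you need, $\mathbb{P}_\xi[A_j\mid A_{j+1}\cap\cdots\cap A_{r-1}]\ge b_j(\xi)/c_j$, is a probability conditional on a future, non-stopping-time confinement event. Conditioning on such an event changes the law of $X$ (a Doob-type $h$-transform), so $b_j(X_{t\wedge\sigma_{j+1}})$ need not remain a supermartingale under the conditional measure, and Doob's inequality cannot simply be re-applied; your unconditional stopped-process bound only controls $\mathbb{P}_\xi[b_j(X_{t\wedge\sigma_{j+1}})>0\;\forall t]$, which does not by itself yield the conditional factor. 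Your proposed remedy (finite horizon, strong Markov at $\sigma_{j+1}\wedge T$, regular conditional distributions, $T\to\infty$) is only a gesture and would require real work to close. The paper makes the corresponding leap when it asserts $\mathbb{P}_\xi[X_t\in C_j^\circ\mid\mathds{1}_{\{X_t\in C_{j+1}^\circ\}}]\ge\hat p_j$ and then, implicitly, $\hat p_j\ge b_j(\xi)/c_j$. So your proposal is on par with the published argument --- same engine, slightly different decomposition, and the same unclosed conditioning step, which you at least identify explicitly rather than passing over.
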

\begin{proof}
We introduce the notations 
$p_j:=\mathbb{P}_\xi[X_t\in\mathcal{C}_j^\circ,\;0\leq t<\infty]$ ,  $\hat{p}_j:=\mathbb{P}_\xi[X_t\in\mathcal{C}_j^\circ,\;0\leq t<\infty\;|\;\mathcal{A}b_j\geq 0]$.

The control signal $u(t)\in\{u\in\U: \mathcal{A}^rb(x)\geq 0\}$ for all $t\geq 0$ provides $\mathcal{A}b_{r-1}\geq 0$. By Proposition \ref{prop: main},
\begin{equation}
    \begin{split}
     p_{r-1} &=\mathbb{P}_\xi[X_t\in\mathcal{C}_{r-1}^\circ,\;0\leq t<\infty]\\
     & =\mathbb{P}_\xi[X_t\in\mathcal{C}_{r-1}^\circ,\;0\leq t<\infty\;|\;\mathcal{A}b_{r-1}\geq 0]\\
     &=\hat{p}_{r-1}.
    \end{split}
\end{equation}

For $j=0,1,...,r-2$, and $0\leq t<\infty$, we have the following recursion:
\begin{equation}\label{E: recursion}
\begin{split}
    & p_j=\mathbb{P}_\xi[X_t\in\mathcal{C}_{j}^\circ]\\ & =\mathbb{E}^{\xi}[\mathds{1}_{\{X_{t}\in C_{j}^\circ\}}\mathds{1}_{\{X_{t}\in C_{j+1}^\circ\}}]+\mathbb{E}^{\xi}[\mathds{1}_{\{X_{t}\in C_{j}^\circ\}}\mathds{1}_{\{X_{t}\notin C_{j+1}^\circ\}}]\\
    & =\mathbb{P}_\xi[X_{t}\in C_{j}^\circ\;|\mathds{1}_{\{X_{t}\in C_{j+1}^\circ\}}]\cdot\mathbb{P}_\xi[X_{t}\in C_{j+1}^\circ]\\
    & +\mathbb{P}_\xi[X_{t}\in C_{j}^\circ\;|\mathds{1}_{\{X_{t}\notin C_{j+1}^\circ\}}]\cdot\mathbb{P}_\xi[X_{t}\notin C_{j+1}^\circ],
\end{split}
\end{equation}
where we have used shorthand notations $\{X_t\in\mathcal{C}_{j}\}:=\{X_t\in\mathcal{C}_{j},\;0\leq t<\infty\}$ and $\{X_{t}\notin C_{j+1}\}:=\{X_{t}\notin C_{j+1}\;\text{for\;some}\;0\leq t<\infty\}$.
Indeed, we have $$ X_t=X_t\mathds{1}_{\{X_{t}\in C_{j+1}^\circ\}}+X_t\mathds{1}_{\{X_{t}\notin C_{j+1}^\circ\}},$$
then
$$ \mathbb{E}[X_t]=\mathbb{E}[X_t\mathds{1}_{\{X_{t}\in C_{j+1}^\circ\}}]+\mathbb{E}[X_t\mathds{1}_{\{X_{t}\notin C_{j+1}^\circ\}}],$$
and \eqref{E: recursion} follows. Note that
\begin{equation}
    \begin{split}
      &\mathbb{P}_\xi[X_{t}\in C_{j}^\circ\;|\mathds{1}_{
      \{X_{t}\in C_{j+1}^\circ\}}]\\
      &\geq \mathbb{P}_\xi[X_{t}\in C_{j}^\circ\;| \mathcal{A}b_{j}\geq 0]=\hat{p}_j,
    \end{split}
\end{equation}
and therefore 
$$\mathbb{P}_\xi[X_{t}\in C_{j}^\circ\;|\mathds{1}_{\{X_{t}\in C_{j+1}^\circ\}}]\cdot\mathbb{P}_\xi[X_{t}\in C_{j+1}^\circ]\geq \hat{p}_jp_{j+1}.$$

 Now define stopping times $\tau_j=\inf\{t: b_j(X_{t})\leq 0\}$ for $j=0,1,...,r-2$, then $b_j(X_{t\wedge \tau_j})\geq 0$ a.s.. In addition,
 $$X_{t\wedge\tau_j}=\mathds{1}_{\{\tau_j\leq \tau_{j+1}\}}X_{t\wedge\tau_j}+\mathds{1}_{\{\tau_j> \tau_{j+1}\}}X_{t\wedge\tau_j}$$
 
 Assume the worst scenario, which is for all $t\geq \tau_{j+1}$, we have $\mathcal{A}b_j\leq 0$. On $\{\mathcal{A}b_j<0\}\cap \{\tau_j> \tau_{j+1}\}$, we have $b_j(X_{\tau_{j+1}})>0$ and $\mathbb{E}^{\xi}[b_j(X_{t\wedge\tau_j})]\leq b_j(X_{\tau_{j+1}})-\int_{\tau_{j+1}}^t\varepsilon(s)ds$ for some $\varepsilon:\mathbb{R}_{\geq 0}\rightarrow \mathbb{R}_{>0}$ and $t\geq \tau_{j+1}$. Therefore, the process $b_j(X_{t\wedge\tau_j})$  is a nonnegative supermartingale and 
 $\mathbb{P}_\xi[\sup_{T\leq t<\infty}b_j(X_{t\wedge\tau_j})\geq \lambda]\leq \frac{b_j(X_{\tau_{j+1}})-\int_{\tau_{j+1}}^T\varepsilon(s)ds}{\lambda}$ by Doob's supermartingale inequality. For any $\lambda>0$, we can find a finite $T\geq \tau_{j+1}$ such that $\mathbb{P}_\xi [\sup_{T\leq t<\infty}b_j(X_{t\wedge\tau_j})\geq \lambda]=0$. Since $\lambda$ is arbitrarily selected, we must have $\mathbb{P}_\xi[\sup_{T\leq t<\infty}b_j(X_{t\wedge\tau_j})>0]=0$, which means $\tau_j$ is triggered within finite time.
 On the other hand, on $\{\mathcal{A}b_j<0\}\cap \{\tau_j\leq  \tau_{j+1}\}$, $\tau_j$ has been already triggered. Therefore, $\{X_t\notin\mathcal{C}_j^\circ\;\text{for\;some}\;0\leq t<\infty\}$ a.s. given $\{\mathcal{A}b_j<0\}$.  Hence,
 \begin{equation}
     \begin{split}
         &\mathbb{P}_\xi[X_{t}\in C_{j}^\circ\;|\mathds{1}_{\{X_{t}\notin C_{j+1}^\circ\}}]\\
         &\geq \mathbb{P}_\xi[X_{t}\in C_{j}^\circ\;|\mathds{1}_{\{X_{t}\notin C_{j+1}^\circ,\;\forall t\geq \tau_{j+1}\}}]\\
         &\geq \mathbb{P}_\xi[X_{t}\in C_{j}^\circ\;|\mathds{1}_{\{\mathcal{A}b_j<0,\;\forall t\geq \tau_{j+1}\}}]=0
     \end{split}
 \end{equation}
 Combining the above, for $j=0,1,...,r-2$, we have  
 $$p_j\geq \hat{p}_jp_{j+1},$$
 and ultimately $p_0\geq \prod\limits_{j=0}^{r-1}\hat{p}_j$. 
\end{proof}
\begin{remark}\label{rem: thm}
The above result estimates the lower bound of the safety probability given the constrained control signals $\ub_\upsilon$. Based on recursion  \eqref{E: recursion}, we can  easily obtain the same result by  dropping the last term. However, we  argued that under some extreme conditions the worst case may happen. Indeed, we have assumed that $t\geq \tau_{j+1}\implies \mathcal{A}b_j\leq 0$. This conservative assumption  
is made such that within finite time $X$ will cross the boundary of each $\mathcal{C}_j$.

Another implicit condition may cause the worst-case lower bound as well, that is when $\bigcup_{j=0}^{r-1}\{\mathcal{A}b_j=0,\;0\leq t<\infty\}$ is a $\mathbb{P}_\xi$-null set. This, however, is practically possible since the controller indirectly influences the value of $\mathcal{A}b_j$ for all $j<r$, the strong invariance of the level set $\{\mathcal{A}b_j=0\}$ is not guaranteed using QP scheme. 
\end{remark}

\section{Data-driven method for safety-critical control of unknown SDEs}
In this section, we will first address the process of identifying the system with data-driven method using Bayesian inference and then, propose our QP-based control framework for safety-critical control of the unknown SDE with learned dynamics. 
\subsection{Data collection for drift terms}
In order to collect training data to estimate $f(x)$ and $g(x)$, we sample an initial state and 
 let the system evolve under control inputs. More specifically, given some $T>0$, we divide the time interval $[0,T]$ into a sufficiently refined partition $\{t_0,t_1,...,t_N\}$ with $N\in\mathbb{N}$ such that $0=t_0<t_1<\dots<t_N=T$. Define the index set as $\mathcal{I}=\{0, 1,\dots,N\}$. At each $i\in\mathcal{I}$, we simulate the system with $K$ times from $x_i$ with two control inputs $u_{i,1}$ and $u_{i,2}$ such that $u_{i,1}\neq u_{i,2}$. For each $u_{i,j}$ with $j\in\{1,2\}$, we will get $x^k_{i,j}$ for $k\in\{1,2,...,K\}$. By applying central limit theorem,  we get 
 \begin{equation*}
     dx_{i,j}=\sum_{k=1}^K\frac{[x_{i+1,j}-x_{i}|x_{i}]}{K}.
 \end{equation*}
 Accordingly, we can calculate the target value for $f(x_{i})$ and $g(x_{i})$ as
 \begin{equation}
     \begin{split}
     y_{f(x_i)}&=\frac{dx_{i,1}\cdot u_{i,2}-dx_{i,2}\cdot u_{i,2}}{(u_{i,2}-u_{i,1})(t_{i+1}-t_{i})}\\
     y_{g(x_i)}&=\frac{dx_{i,1}-dx_{i,2}}{(u_{i,1}-u_{i,2})(t_{i+1}-t_{i})}\\
     \end{split}
 \end{equation}
 
Finally we will get training data as
\begin{equation}\label{eq:training-data}
\begin{split}
    \textbf{X}&=[x_0, \dots, x_i,\dots, x_N]^T,\\
    \textbf{Y}_f&=[y_{f(x_0)},\dots, y_{f(x_i)}, \dots, y_{f(x_N)}]^T\\
    \textbf{Y}_g&=[y_{g(x_0)},\dots, y_{g(x_i)}, \dots, y_{g(x_N)}]^T\\
\end{split}
\end{equation}

\subsection{Identification of drift terms}
We use Bayesian linear regression model  \cite{box2011bayesian} to identify the drift term of the system. Define $\Phi$ as the base function and $\theta$ as the weight vector where $\Phi\in\mathbb{R}^{N\times M}$ and $\theta\in\mathbb{R}^{M\times 1}$. Here, $M$ is the number of base function. We can write the regression equation as
\begin{equation}\label{eq:bayesian-linear}
    Y=\Phi\theta+\epsilon
\end{equation}
The noise term $\epsilon$ is assumed to follow a Gaussian distribution $\mathcal{N}(0,\Psi)$ where $\Psi$ is a diagonal matrix with the $i$th element as $\sigma^2/K$. Initially, we impose a Gaussian prior $p(\theta)$ with zero mean and covariance matrix $\Sigma_0$ on weight vector $\theta$. According to Bayes' theorem, 
the posterior distribution can be calculated analytically as 
\begin{equation*}
    P(\theta|Y)=\mathcal{N}(\bar{\theta},\bar{\Sigma})
\end{equation*}
with $\bar{\theta}$ the maximum a posterior estimation (MAP) as
\begin{equation}\label{eq:weights}
    \bar{\theta}=(\Phi^T\Phi+\frac{\sigma^2}{K}\Sigma_0^{-1})^{-1}\Phi^TY
\end{equation}
and $\bar{\Sigma}$ as the posterior covariance matrix as
\begin{equation*}
    \bar{\Sigma}=\frac{\sigma^2}{K}(\Phi^T\Phi+\frac{\sigma^2}{K}\Sigma_0^{-1})^{-1}
\end{equation*}
As a result, given training data $X$, $Y_f$ and $Y_g$ as in Equation~\ref{eq:training-data}, we can approximate the drift terms $\hat{f}(x)$ and $\hat{g}(x)$. By properly choosing the base functions $\Phi_{\hat{f}}$ and $\Phi_{\hat{g}}$, we can construct models for both $\hat{f}(x)$ and $\hat{g}(x)$ according to Equation~\ref{eq:bayesian-linear} as 
\begin{equation*}
    \begin{split}
        \hat{f}(x) &= \Phi_f\theta_f+\epsilon_f\\
        \hat{g}(x) &= \Phi_g\theta_g+\epsilon_g\\
    \end{split}
\end{equation*}
and calculate weights $\bar{\theta}_f$ and $\bar{\theta}_g$ based on Equation~\ref{eq:weights}.

\subsection{Data collection for diffusion term}
After we estimate the drift terms, we can use the estimated $\hat{f}(x)$ and $\hat{g}(x)$ to identify the diffusion term. Similar as above, we also randomly sample an initial state and let the system evolve under some control $u_i$ with $i\in\mathcal{I}$ at each time step given the refined partition $\{t_0,t_1,...,t_N\}$. Specifically, at each time step, we calculate the $\xi_i$ using estimated drift term as
\begin{equation*}
    \xi_i=x_{i+1}-x_{i}-\hat{f}(x_i)\Delta t-\hat{g}(x_i)u_i\Delta t.
\end{equation*}
Then we will get the dataset $\mathcal{D}_{\xi}$ containing $[\xi_0,\dots,\xi_{N-1}]^T$ to estimate $\sigma$. 
\subsection{Identification of diffusion term}
The likelihood function given dataset $\mathcal{D}$ is
\begin{equation*}
    L(\mathcal{D}|\sigma)=\Pi_{i=0}^{N-1}P(\xi_i|\sigma),
\end{equation*}
where
\begin{equation*}
    P(\xi_i|\sigma)=\frac{1}{\sqrt{2\pi\sigma^2}}e^{-\frac{1}{2}\frac{\xi_i^2}{\sigma^2}}
\end{equation*}
Given a prior distribution $P(\sigma)$, the posterior distribution of $\sigma$ is 
\begin{equation}\label{eq:post}
    P(\sigma|\mathcal{D})\varpropto L(\mathcal{D}|\sigma) \cdot P(\sigma).
\end{equation}

Then we use the MAP method to estimate $\sigma$ by maximizing the log-posterior distribution. According to Eq (\ref{eq:post}), we can get
\begin{equation*}
    \log(P(\sigma|\mathcal{D}))=\log{L(\sigma|\mathcal{D})+\log(P(\sigma))}
\end{equation*}
By using the inverse-gamma distribution
\begin{equation}\label{eq:gamma_prior}
    P(\sigma)=\frac{\beta^{\alpha}}{\Gamma(\alpha)}\frac{1}{\sigma^{^{\alpha+1}}}e^{-\frac{\beta}{\sigma}}
\end{equation}
as the prior distribution, we can get
\begin{equation*}
    \log{P(\sigma)|\mathcal{D}}=-(N+\alpha)\log{\sigma}-\frac{1}{2\sigma^2}\sum_{i=0}^{N-1}\xi_i^2-\frac{\beta}{\sigma}+C
\end{equation*}
with 
\begin{equation*}
    C=\alpha\log{\beta}-\log{\Gamma(\alpha)}-(N-1)\log{\sqrt{2\pi}}
\end{equation*}
is a constant \cite{box2011bayesian}. 
The best estimation $\hat{\sigma}$ is obtained by taking the derivative of the log-posterior with respect to $\sigma$ as 
\begin{equation}\label{eq:log_post}
    \frac{d\log{P(\hat{\sigma}|\mathcal{D})}}{d\hat{\sigma}}=0.
\end{equation}

\subsection{QP-based control framework for learned dynamics with SCBFs}
Based on the above identification process, we can make prediction using the learned model to approximate the dynamics of an unknown SDE. For drift terms, we can make prediction as $\hat{f}(x)=\Phi_f\bar{\theta}_f$ and $\hat{g}(x)=\Phi_g\bar{\theta}_g$ and for diffusion term, we estimate noise level using $\hat{\sigma}$ as in Equation~\ref{eq:log_post}. As a result, we can control the unknown SDE using a QP-based control framework with SCBF as
\begin{equation}\label{eq:QP}
	\begin{split}
		u(x)&=\argmin_{u\in\mathbb{R}}\frac{1}{2}||u||^2,\\
		&\text{s.t.}\quad \hat{\mathcal{A}}B(x)\geq 0,
	\end{split}
\end{equation}
where
\begin{equation*}
    \hat{\mathcal{A}}B(x) =\frac{\partial B}{\partial x}(\hat{f}(x)+\hat{g}(x)u+\frac{1}{2}\sum\limits_{i,j}\left(\hat{\sigma}\hat{\sigma}^T\right)_{i,j}(x)\frac{\partial^2B}{\partial x_i\partial x_j}.
\end{equation*}

\section{Simulation results}
\subsection{Example 1}
In the first example, we test our result using a nonlinear system given by the following stochastic differential equation as in \cite{wang2022data}:
\begin{equation*}
d\begin{bmatrix}
\dot{x}_1\\
\dot{x}_2
\end{bmatrix}=\begin{bmatrix}
-0.6x_1-x_2\\
x_1^3
\end{bmatrix}dt+\begin{bmatrix}
0\\
x_2
\end{bmatrix}udt+\begin{bmatrix}
\sigma_1&0\\
0&\sigma_2
\end{bmatrix}dW.
\end{equation*}
The safe region is defined as 
\begin{equation*}
    h(x)=-x_2^2-x_1^2+1>0.
\end{equation*}
The generator of $h$ is calculated as
\begin{equation*}
    \mathcal{A}(h)=1.2x_1^2+x_1x_2-2x_1^3x_2-2x_2^2u-(\sigma_2^2+\sigma_2^2),
\end{equation*}
and we select 
\begin{equation*}
    \begin{split}
        b_0(x)&=h(x),\\
        b_1(x)&=\mathcal{A}(x).
    \end{split}
\end{equation*}
In order to identify $f(x)$ in the drift term, we first randomly sample 10 initial points and at each initial point, we set $u=0$ and simulate the system with $\Delta t=0.01$. For any state $x_t=\xi$, we apply the control to the system at state $\xi$ for $K$ times to calculate the expectation of $f(x_t)$ using the central limit theorem as 
\begin{equation}
    f(\xi)+\epsilon\approx\sum_{j=0}^K\frac{[x_{t_K+1}-x_t|x_t=\xi]}{K\Delta t}.
\end{equation}
The we will get the training data for $f(x)$ as
\begin{equation*}
\begin{split}
    \bold{X}&=[\xi_1, \xi_2,...,.\xi_N]^T,\\
    \bold{Y}&=[f(\xi_1), f(\xi_2), ..., f(\xi_N)].
\end{split}
\end{equation*}
and estimate $f(x)$ using linear regression. 
After $f(x)$ is estimated, we use the estimated function $\hat{f}(x)$ to identify $g(x)$ in the drift term. We set control input as $u=0.1$ since the system is control affined and $g(x)$ is independent of $u$. We collect the training data similarly as in estimating $f(x)$. 
Through out this example of the simulation, we use polynomial base functions as 
\begin{equation*}
    \Phi=[1, x_1, x_2, x_1^2, x_2^2, x_1x_2, x_1^3, x_2^3]
\end{equation*}
for both $f(x)$ and $g(x)$. The result of estimation for $f(x)$ is shown in Fig \ref{fig:example1_esimate_f}. We also calculate the mean square error (MSE) for $f_1(x)$ and $f_2(x)$ over 100 randomly sample points and show the result in Table \ref{table:example1_error}.

\begin{table}[!htbp]
  \caption{MSE for estimation of $f_1(x)$ and $f_2(x)$ with $K=10,30,100$ over 100 randomly sample points.}
  \label{table:example1_error}
  \centering
  \begin{tabular}{cccc}
    \toprule
     & $f_1(x)$ & $f_2(x)$\\
    \midrule
    K=10 & $e^{-2}$ & $5e^{-2}$\\
    K=30 & $2e^{-3}$ & $8e^{-3}$\\
    K=100 & $6e^{-4}$ & $6e^{-4}$\\
    \bottomrule
  \end{tabular}
\end{table}


\begin{figure}[htbp]
    \centering
    \includegraphics[width=.8\linewidth]{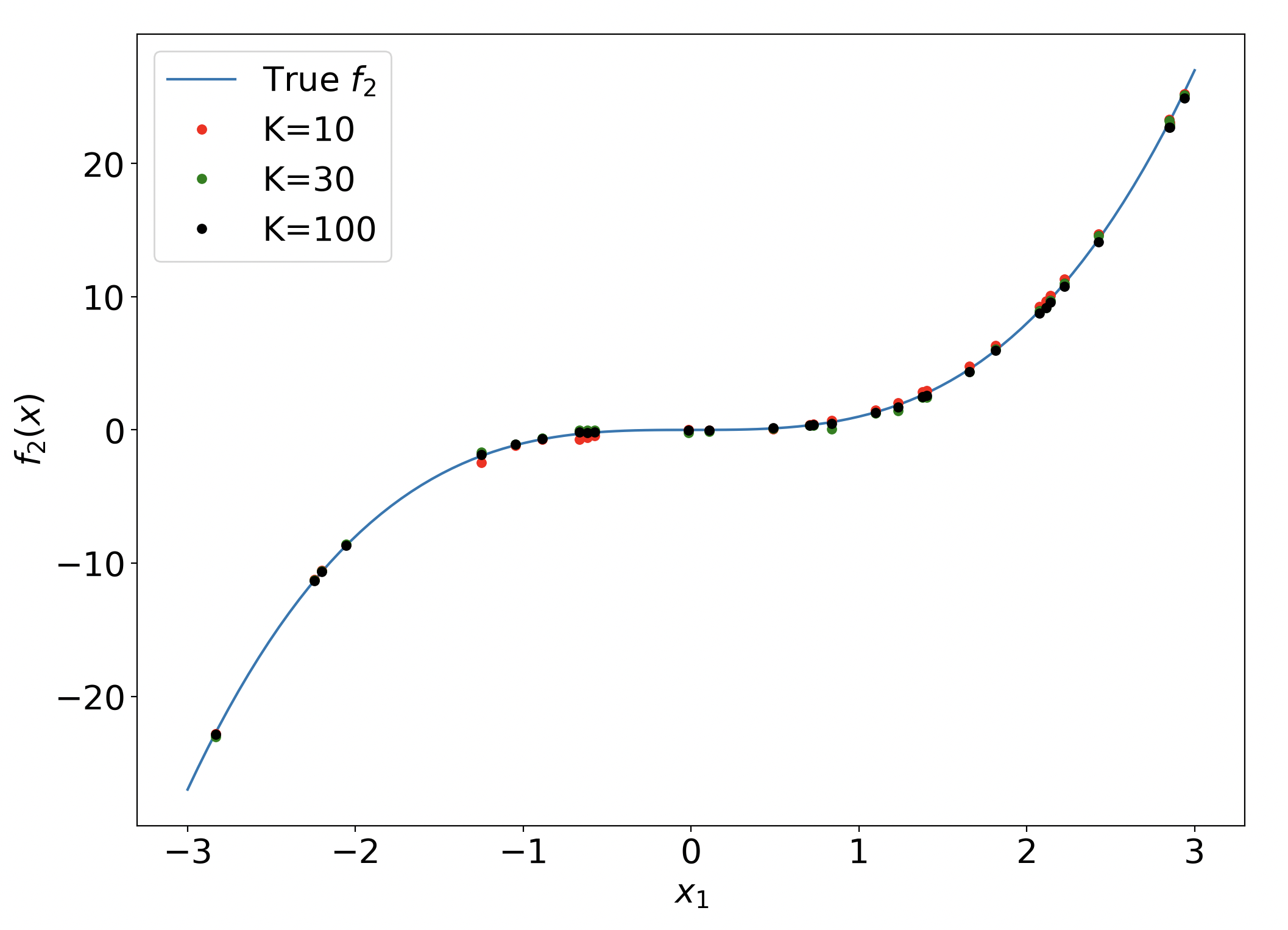}
    \caption{Estimation of $f_2(x)=x_1^3$ using $K=10$, $K=30$ and $K=100$. The estimated result is compared with the true value of the function using 30 randomly sampled points. The diffusion of the system for all the cases is $[0.2, 0.2]$.}
    \label{fig:example1_esimate_f}
\end{figure}

After we estimate the drift term, we collect data using the estimated $\hat{f}(x)$ and $\hat{g}(x)$ for 100 initial points and 300 simulation steps for each initial point. At each time step, we calculate the noise data using 
\begin{equation*}
    \phi_t=x_{t+1}-x_{t}-\hat{f}(x_t)\cdot dt-\hat{g}(x_t)\cdot u\cdot dt
\end{equation*}
and add it into the data set $\mathcal{D}$. We use inverse gamma distribution as in Equation~\ref{eq:gamma_prior} with $\alpha=1$ and $\beta=1$.

We use log-likelihood function as
\begin{equation*}
    \log L(\sigma | \mathcal{D}) = \sum_{i=1}^n \left[ -\frac{1}{2} \log(2\pi\sigma^2) - \frac{\phi_i^2}{2\sigma^2} \right],
\end{equation*}
where $n$ is the number of data in $\mathcal{D}$. We estimate a $\hat{\sigma}$ such that the log-posterior distribution 
\begin{equation*}
    \log P(\hat{\sigma} | \mathcal{D}) = \log P(\sigma) + \log L(\hat{\sigma} | \mathcal{D})
\end{equation*}
is maximized. The distribution over 10000 random sample from posterior distribution of $\sigma_1$ and $\sigma_2$ is shown in Fig \ref{fig:example1_sigma}.
\begin{figure}[htbp]
    \centering
    \includegraphics[width=.8\linewidth]{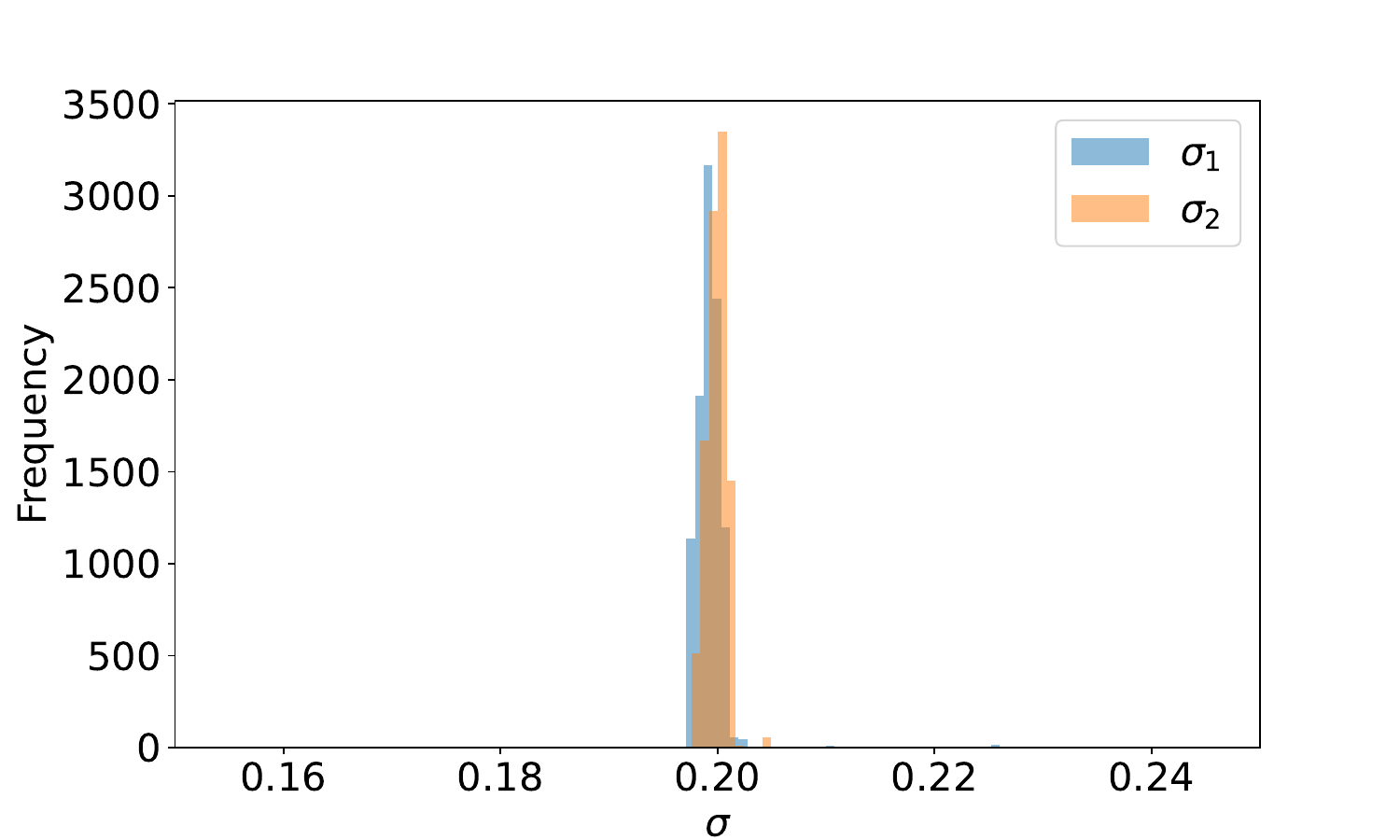}
    \caption{Posterior distribution of $\sigma_1$ and $\sigma_2$ over 10000 random samples for Example 2.}
    \label{fig:example1_sigma}
\end{figure}

In order to analyze the performance of safety control w.r.t the estimated system, we count the number of safe trajectories over 1000 simulations under different initial points between the SCBF, DDSCBF \cite{wang2022data} and Bayesian SCBF. For SCBF, we assume that we have knowledge of true system and for DDSCBF, we assume that we know the drift of the system and only the diffusion is unknown to us. The statistical results at initial state $[-0.1, 0.7]^T$ and $[-0.1, 0.8]^T$ are shown in Table \ref{table:example1_ratio}. The analytical result in the table is the worst-case safe probability calculated based on \cite[Theorem III.8]{wang2021safety}. Since the relative degree if of $r=1$, the worst-case probability is calculated as $P=\frac{b_0(x_0)}{c_0}$ where $c_0=\sup_{x\in\mathcal{C}}h(x)$ with $\mathcal{C}$ as the safe set. It is easy to find that $c_0=1$ so the worst-case probability is calculated as $P=0.5$ and $P=0.35$ for initial state $[-0.1, 0.7]^T$ and $[-0.1, 0.8]^T$, respectively. 

\begin{table}[!htbp]
  \caption{Safety ratio over 1000 simulation trajectories at initial state $[-0.1, 0.7]^T$ and $[-0.1, 0.8]^T$ with diffusion $\sigma_1, \sigma_2=0.2$.}
  \label{table:example1_ratio}
  \centering
 \begin{tabular}{cccc}
    \toprule
    & \multicolumn{2}{c}{\textbf{Safety ratio}} & \\
    \cmidrule(lr){2-3}
    & $(-0.1, 0.7)$ & $(-0.1, 0.8)$ & \\
    \midrule
    SCBF\cite{wang2021safety} & 91\% & 55\% & \\
    DDSCBF\cite{wang2022data} & 88\% & 45\% & \\
    Bayesian SCBF & 90\% & 43\% & \\
    Analytical\cite{wang2021safety} & 50\% & 35\% & \\
    
    \bottomrule
  \end{tabular}
\end{table}

\subsection{Example 2}
In the second example, we use an adaptive cruise control (ACC) model as in \cite{xiao2022sufficient} to validate the estimation of the system and safety control. The model is 
\begin{equation*}
    d\begin{bmatrix}
    v\\
    z\\
    \end{bmatrix}=\begin{bmatrix}
    -F_r(x)/M\\
    v_f-v\\
    \end{bmatrix}dt+\begin{bmatrix}
    1/M\\
    0\\
    \end{bmatrix}udt+\begin{bmatrix}
    \sigma_1& 0\\
    0 &\sigma_2\\
    \end{bmatrix} dW,
\end{equation*}
where $x=[v,z]$ is the state of the system representing the speed of the behind vehicle and the distance between two vehicles, respectively. $v_f$ is the speed of the front vehicle. The aerodynamic drag is $F_r(x)=f_0+f_1x_1+f_2x_1^2$ with $f_0=0.1$, $f_1=5$, $f_2=0.25$ and the mass of the vehicle is $M=1650$. We require the behind vehicle to reach a desired speed $v_d=22$ while keeping a minimum distance with $D=10$ from the front vehicle. Since the second-order generator of $h(x)$ is 
\begin{equation*}
    \mathcal{A}\mathcal{A}h(x)=L^2_fh(x)+L_gL_fh(x)u+\Sigma^T\frac{\partial^4 h(x)}{\partial x^4}\Sigma
\end{equation*}
with
\begin{equation*}
    \Sigma=\begin{bmatrix}
    \sigma_1& 0\\
    0 &\sigma_2\\
    \end{bmatrix},
\end{equation*}
we select $h(x)=(z-D)^5$ with a non-zero fourth-order derivative. As a result, select
\begin{equation*}
\begin{split}
    b_0(x)&=h(x),\\
    b_1(x)&=\mathcal{A}h(x)=5(z-D)^4(v_f-v),\\
    b_2(x)&=\frac{5(z-D)^4}{M}\cdot F+20(z-D)^3\cdot (v_f-v)^2\\
    &-\frac{(z-D)^4}{M}\cdot u+120(z-D)(\sigma_1^2+\sigma_2^2).\\
\end{split}
\end{equation*}
We use the same base functions, prior distribution and likelihood function as in the first example for system estimation. The result of estimation for $f(x)$ is displayed as in Fig \ref{fig:example2_esimate_f} and the distribution over 10000 random samples from the posterior distribution of $\sigma_1$ and $\sigma_2$ is shown in Fig \ref{fig:example2_sigma}.
\begin{figure}[htbp]
	\centering
	\begin{subfigure}[h]{1\linewidth}
		\includegraphics[width=.8\linewidth]{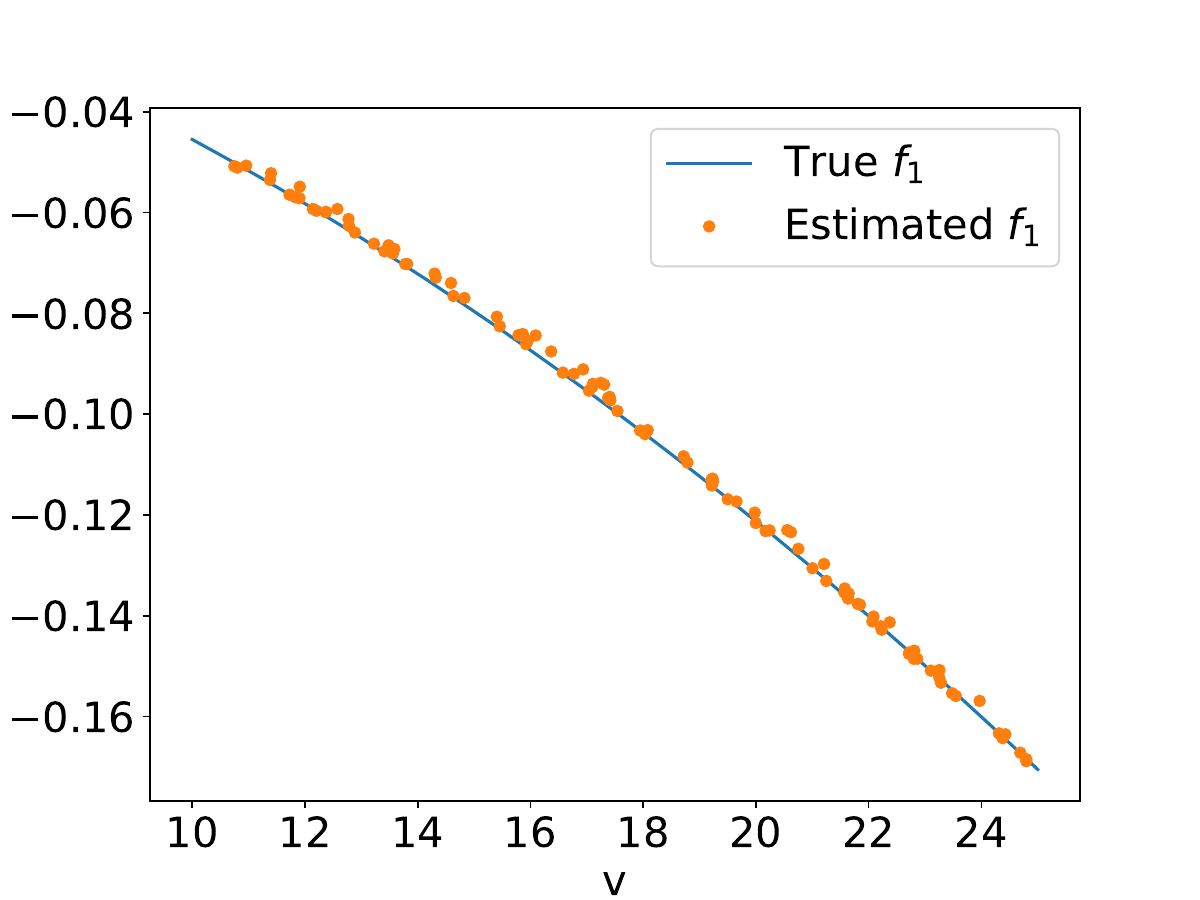}
	\caption{}
	\label{fig:nonlinear_clf}
	\end{subfigure}\\
	\begin{subfigure}[h]{1\linewidth}
		\includegraphics[width=.8\linewidth]{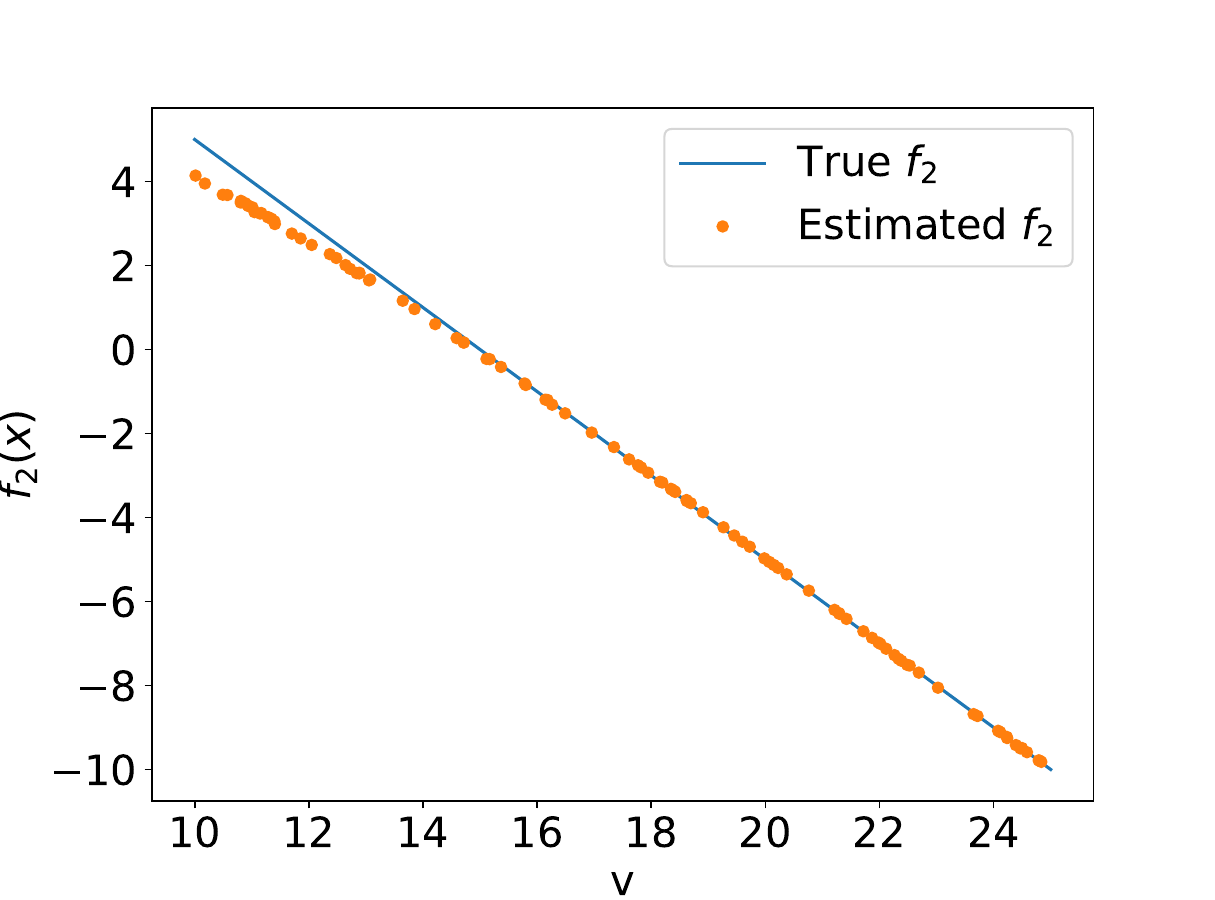}
	\caption{}
        \end{subfigure}%
	\caption{Estimation of $f(x)$ for Example 2. The result is validated using 100 randomly sampled states. (a): Estimation of $f_1(x)$. (b): Estimation of $f_2(x)$.}
    \label{fig:example2_esimate_f}
\end{figure}

\begin{figure}[htbp]
    \centering
    \includegraphics[width=.8\linewidth]{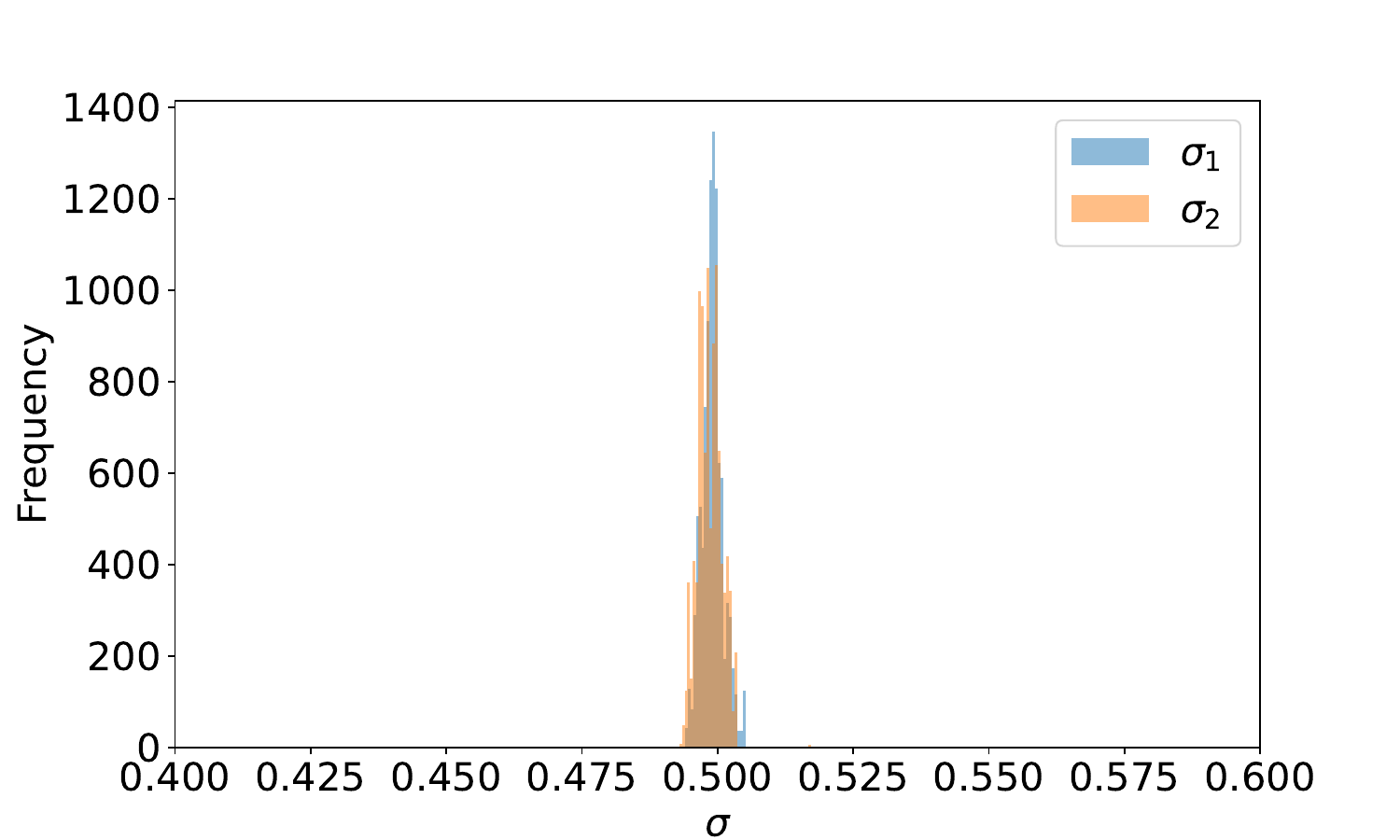}
    \caption{Posterior distribution of $\sigma_1$ and $\sigma_2$ over 10000 random samples for Example 2. The true values are $\sigma_1=0.5$ and $\sigma_2=0.5$.}
    \label{fig:example2_sigma}
\end{figure}
To control the vehicle to reach the desired speed, we use control Lyapunov function as in \cite{wang2021safety}. Similar as in example 1, we calculate the safety ratio over 1000 simulations between SCBF, DDSCBF and Bayesian SCBF. Note that the SCBF is a non-convex function in this example, we can not find the supreme value of $c_0$ as in the first example. So we only compare simulated safety-probability and the result is shown in Table \ref{table:example2_ratio}.

\begin{table}[!htbp]
  \caption{Safety ratio over 1000 simulation trajectories with diffusion $\sigma_1, \sigma_2=0.5$. The initial state is $[v, z]^T=[10, 15]^T$.}
  \label{table:example2_ratio}
  \centering
  \begin{tabular}{cccc}
    \toprule
     & Safety ratio\\
    \midrule
    SCBF\cite{wang2021safety} & 83\%  \\
    DDSCBF\cite{wang2022data} & 65\%\\
    Bayesian SCBF & 78\%\\
    \bottomrule
  \end{tabular}
\end{table}
Also, as stated in \cite{wang2022data}, the bottleneck of the DDSCBF is that when the relative degree is higher than 1, we have to recursively apply this method in each order to calculate higher order generators. As in this example, we have relative degree $r=2$, we need to learn the first-order generator, and then use the estimated first-order generator to sample data for estimating the second order generator. In the process of collecting training data for second order generator using DDSCBF, we have to sample $N$ points and at each point, sample $K$ simulations and apply CLT to calculate training data. However, in each simulation, we need to get the value of first-order generator through network, which is very time consuming. As a result, we compare the running time between DDSCBF and Bayesian estimated SCBF. The runtime of both methods are displayed in Table \ref{table:example2_runtime}. In the simulation of DDSCBF, we select $N=300$ and $K=10000$.

\begin{table}[!htbp]
  \caption{Runtime of learning process}
  \label{table:example2_runtime}
  \centering
  \begin{tabular}{cccc}
    \toprule
     & Runtime\\
    \midrule
    DDSCBF\cite{wang2022data} & over 5 hours\\
    Bayesian SCBF & 15s\\
    \bottomrule
  \end{tabular}
\end{table}

\section{conclusion}
This paper is the first one to address safety-critical control problem for fully unknown stochastic differential equations. In this paper, we use Bayesian inference to estimate both the drift and diffusion terms of the system and use SCBF to guarantee safety for the learned system. We validate safety ratio with statistical results using nonlinear models and compare with worse-case analytical safety probabilities. Future work would be focused on safety-critical control with limited observations with sparse learning inspired by recently work in SDE identifications. 

\bibliographystyle{ieeetr}
\bibliography{references}

\end{document}